\theoremstyle{plain}
\numberwithin{equation}{section}
\newtheorem{thm}{Theorem}[section]
\newtheorem{lem}[thm]{Lemma}
\newcounter{cond}
\newcommand{\complex}{{\mathbb C}}
\newcommand{\real}{{\mathbb R}}
\newcommand{\ascript}{{\mathcal A}}
\newcommand{\oscript}{{\mathcal O}}
\newcommand{\rmre}{\mathrm{Re\,}}
\newcommand{\rmtr}{\mathrm{tr\,}}
\newcommand{\ab}[1]{\left|#1\right|}
\newcommand{\brac}[1]{\left\{#1\right\}}
\newcommand{\paren}[1]{\left(#1\right)}
\newcommand{\sqbrac}[1]{\left[#1\right]}
\newcommand{\ket}[1]{{\left|#1\right>}}
\newcommand{\bra}[1]{{\left<#1\right|}}
\newcommand{\doubleab}[1]{\left\|#1\right\|}
\newcommand{\elbows}[1]{{\left\langle#1\right\rangle}}
\begin{document}

\title{THE UNIVERSE AND\\THE QUANTUM COMPUTER}
\author{Stan Gudder\\ Department of Mathematics\\
University of Denver\\ Denver, Colorado 80208\\
sgudder@math.du.edu}
\date{}
\maketitle

\begin{abstract}
It is first pointed out that there is a common mathematical model for the universe and the quantum computer. The former is called the histories approach to quantum mechanics and the latter is called measurement based quantum computation. Although a rigorous concrete model for the universe has not been completed, a quantum measure and integration theory has been developed which may be useful for future progress. In this work we show that the quantum integral is the unique functional satisfying certain basic physical and mathematical principles. Since the set of paths (or trajectories) for a quantum computer is finite, this theory is easier to treat and more developed. We observe that the sum of the quantum measures of the paths is unity and the total interference vanishes. Thus, constructive interference is always balanced by an equal amount of destructive interference. As an example we consider a simplified two-slit experiment
\end{abstract}
\medskip

%{\noindent\textbf{Keywords:}\enspace quantum measures, anhomomorphic logics, coevent

%\hskip 4.2pc interpretation.}\medskip

\section{Introduction}  % Section 1
Both the universe and a measurement based quantum computer can be modeled as follows
\begin{equation}         % equation (1.1)
\label{eq11}
W\to U_1\to M_1\to U_2\to M_2\to\cdots\to U_n\to M_n
\end{equation}
In this process, $W$ is the initial state given by a density operator, $U_1$ is a quantum gate
(or propagator from time 0 to $t_1$) given by a unitary operator, $M_1$ is a quantum event given by a projection operator $P_1(A_1)$, $U_2$ is a quantum gate (or propagator from time $t_1$ to $t_2$) given by a unitary operator, $M_2$ is a quantum event given by a projection operator $P_2(A_2)$, $\ldots\,$.

In the universe model, $A_1$ is one of various possible configurations and $P_1(A_1)$ is the quantum event that the universe is in configuration $A_1$ at time $t_1$. In the quantum computer model $A_1$ is a set of possible outcomes for a measurement and $P_1(A_1)$ is the quantum event that one of the outcomes in $A_1$ occurs. In this model, the measurements can be adaptive in the sense that a choice of measurements may depend on the results of previous measurements. The universe model is referred to as the histories approach to quantum mechanics
\cite{dgt08, djs10, hal09, ish94, mocs05, sor941, sor942, sor07}. A school of researchers believe that this approach is the most promising way to unify quantum mechanics and gravitation. The measurement based quantum computer model is equivalent to the quantum circuit computer model but various researchers believe it has superior properties both in theory and for practical implementation \cite{bbdrv09, gespg07, joz09}. It has been suggested that the universe is itself a gigantic quantum computer. This paper shows that there may be relevance to this statement.

The mathematical background for these models consists of a fixed complex Hilbert space $H$ on which the operators $W$, $U_i$ $P_i(A_j)$ act. In accordance with quantum principles, we assume that $P_i$ is a projection-valued measure, $i=1,2,\ldots\,$. In the universe model, $H$ is infinite dimensional, $t$ is continuous and the sets $A_i$ can be infinite. At the present time, this theory is not complete and is not mathematically rigorous
\cite{djs10, ish94, sor942}. Nevertheless, a quantum measure and integration theory has been developed to treat this approach \cite{gudms, gud09, sor941, sor942, sor07}. In this article we show that the quantum integral defined in
\cite{gud09} is the unique functional that satisfies certain basic physical and mathematical principles.

Since the universe is too vast and complicated for us to tackle in detail now, we move on to the study of quantum computers. From another viewpoint, we are treating toy universes that are described by finite dimensional Hilbert spaces and finite sets. This experience may give us the power and confidence to tackle the real universe later. In this work we make some observations that may be useful in developing the structure for a general theory. For example, we show that the sum of the quantum measure of the paths is unity and that the total interference vanishes. Thus, constructive interference is always balanced by an equal amount of destructive interference. We hope that this article encourages an interchange of ideas between the two groups working on measurement based quantum computation and the histories approach to quantum gravity. It is fascinating to contemplate that the very large and the very small may be two aspects of the same mathematical structure.

\section{Quantum Measures and Integrals} % Section 2
One of the main studies of the universe is the field of quantum gravity and cosmology. In this field an important role is played by the histories approach to quantum mechanics \cite{dgt08, djs10, hal09, ish94, sor07}. Let $\Omega$ be the set of \textit{paths} (or \textit{histories} or \textit{trajectories}) for a physical system. We assume  that there is a natural
$\sigma$-algebra $\ascript$ of subsets of $\Omega$ corresponding to the physical \textit{events} of the system and that $\brac{\omega}\in\ascript$ for all $\omega\in\Omega$. In this way, $(\Omega ,\ascript )$ becomes a measurable space. A crucial tool in this theory is a \textit{decoherence functional} $D\colon\ascript\times\ascript\to\complex$ where $D(A,B)$ roughly represents the interference amplitude between events $A$ and $B$. Examples of decoherence functionals for the finite case are given in Section~3. In the infinite case, the form of the decoherence functional is not completely clear \cite{djs10}. However, it is still assumed that $D$ exists and that future research will bear this out. It is postulated that $D$ satisfies the following conditions.

\begin{list} {(\arabic{cond})}{\usecounter{cond}
\setlength{\rightmargin}{\leftmargin}}
%(1)
\item $D(A,B)=\overline{D(B,A)}$ for all $A,B\in\ascript$.
%(2)
\item $D(A\cup B,C)=D(A,C)+D(B,C)$ for all $A,B,C\in\ascript$ with $A\cap B=\emptyset$.
%(3)
\item $D(A,A)\ge 0$ for all $A\in\ascript$.
%(4)
\item $\ab{D(A,B)}^2\le D(A,A)D(B,B)$ for all $A,B\in\ascript$.
%(5)
\item If $A_i\in\ascript$ with $A_1\subseteq A_2\subseteq\cdots\,$, then $\lim D(A,A)=D(\cup A_i,\cup A_i)$ and if
$B_i\in\ascript$ with $B_1\supseteq B_2\supseteq\cdots\,$, then $\lim D(B_i,B_i)=D(\cap B_i,\cap B_i)$.
\end{list}

If $D$ is the decoherence functional, then $\mu (A)=D(A,A)$ is interpreted as the ``propensity'' that the event $A$ occurs. We refrain from calling $\mu (A)$ the probability of $A$ because $\mu$ does not have the usual additivity and monotonicity properties of a probability. For example, if $A\cap B=\emptyset$, then
\begin{equation*}
\mu (A\cup B)=D(A\cup B,A\cup B)=\mu (A)+\mu (B)+2\rmre D(A,B)
\end{equation*}
Thus, the interference term $\rmre D(A,B)$ prevents the additivity of $\mu$. Even though $\mu$ is not additive in the usual sense, it does satisfy the more general \textit{grade}-2 \textit{additivity} condition
\begin{equation}         % equation (2.1)
\label{eq21}
\mu (A\cup B\cup C)=\mu (A\cup B)+\mu (A\cup C)+\mu (B\cup C)-\mu (A)-\mu (B)-\mu (C)
\end{equation}
for any mutually disjoint $A,B,C\in\ascript$. Moreover, by Condition~(5), $\mu$ satisfies the following
\textit{continuity conditions}.
\begin{align}         % equation (2.2)
\label{eq22}
A_1\subseteq A_2\subseteq\cdots&\Rightarrow\lim\mu (A_i)=\mu (\cup A_i)\\
% equation (2.3)
\label{eq23}
B_1\supseteq B_2\supseteq\cdots&\Rightarrow\lim\mu (B_i)=\mu (\cap B_i)
\end{align}
A grade-2 additive map $\mu\colon\ascript\to\real ^+$ satisfying \eqref{eq22} and \eqref{eq23} is called a
$q$-measure \cite{gudms, gud09, sor941}. A $q$-measure of the form $\mu (A)=D(A,A)$ also satisfies the following
\textit{regularity conditions}.
\begin{align}         % equation (2.4)
\label{eq24}
\mu (A)=0&\Rightarrow\mu (A\cup B)=\mu (B)\hbox{ for all }B\in\ascript\hbox{ with }A\cap B=\emptyset\\
% equation (2.5)
\label{eq25}
\mu (A\cup B)&=0\hbox{ with }A\cap B=\emptyset\Rightarrow\mu (A)=\mu (B)
\end{align}
A $q$-\textit{measure space} is a triple $(\Omega ,\ascript ,\mu)$ where $\Omega ,\ascript )$ is a measurable space and $\mu\colon\ascript\to\real ^+$ is a $q$-measure \cite{gudms, gud09}.

For a $q$-measure space $(\Omega ,\ascript ,\mu )$ if $A,B\in\ascript$ we define the $(A,B)$
\textit {interference term} $I_{A,B}^\mu$ by
\begin{equation*}
I_{A,B}^\mu=\mu (A\cup B)-\mu (A)-\mu (B)-\mu (A\cap B)
\end{equation*}
Of course, $\mu$ is a measure if and only if $I_{A,B}^\mu =0$ for all $A,B\in\ascript$ and $I_{A,B}^\mu$ describes the amount that $\mu$ deviates from being a measure on $A$ and $B$. Since any $q$-measure $\mu$ satisfies
$\mu (\emptyset )=0$, if $A\cap B=\emptyset$ then
\begin{equation}         % equation (2.6)
\label{eq26}
I_{A,B}^\mu =\mu (A\cup B)-\mu (A)-\mu (B)
\end{equation}

\begin{lem}       % Lemma 2.1
\label{lem21}
Let $(\Omega ,\ascript ,\mu)$ be a $q$-measure space with $A_i\in\ascript$ mutually disjoint, $i=1,\ldots ,n$. We then have
\begin{align}         % equation (2.7)
\label{eq27}
\mu\paren{\bigcup _{I=1}^nA_i}&=\sum _{i=1}^n\mu (A_i)+\sum _{i<j=1}^nI_{A_i,A_j}^\mu\\
\noalign{\medskip}
% equation (2.8)
\label{eq28}
\mu\paren{\bigcup _{i=1}^nA_i}&-\mu\paren{\bigcup _{i=2}^nA_i}=\mu (A_1)+\sum _{i=2}^nI_{A_1,A_i}^\mu
\end{align}
\end{lem}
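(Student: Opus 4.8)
The plan is to first extract a ``distributivity'' property of the interference term from grade-$2$ additivity, and then obtain both \eqref{eq27} and \eqref{eq28} by a short induction on $n$.

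The auxiliary step is the identity
\[
I_{A\cup B,C}^\mu=I_{A,C}^\mu+I_{B,C}^\mu
\]
for mutually disjoint $A,B,C\in\ascript$. Because $A,B,C$ are mutually disjoint, every pairwise intersection occurring below is empty, and since $\mu(\emptyset)=0$ the simplified formula \eqref{eq26} applies to each interference term in sight. Writing $I_{A\cup B,C}^\mu=\mu(A\cup B\cup C)-\mu(A\cup B)-\mu(C)$, replacing $\mu(A\cup B\cup C)$ by its grade-$2$ expansion \eqref{eq21} for the disjoint triple $A,B,C$, and cancelling $\mu(A\cup B)$, leaves exactly $\sqbrac{\mu(A\cup C)-\mu(A)-\mu(C)}+\sqbrac{\mu(B\cup C)-\mu(B)-\mu(C)}=I_{A,C}^\mu+I_{B,C}^\mu$. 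Iterating this identity on the first slot gives, for mutually disjoint $A_1,\dots ,A_m,C$,
\[
I_{\bigcup _{i=1}^mA_i,\,C}^\mu=\sum _{i=1}^mI_{A_i,C}^\mu .
\]

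Now \eqref{eq27} follows by induction on $n$: the case $n=1$ is trivial and $n=2$ is precisely \eqref{eq26}. For the inductive step, write $\bigcup _{i=1}^nA_i=\paren{\bigcup _{i=1}^{n-1}A_i}\cup A_n$, a disjoint union, and apply \eqref{eq26} to obtain
\[
\mu\paren{\bigcup _{i=1}^nA_i}=\mu\paren{\bigcup _{i=1}^{n-1}A_i}+\mu (A_n)+I_{\bigcup _{i=1}^{n-1}A_i,\,A_n}^\mu .
\]
The induction hypothesis rewrites the first term, the auxiliary identity rewrites the last term as $\sum _{i=1}^{n-1}I_{A_i,A_n}^\mu$, and merging $\sum _{1\le i<j\le n-1}I_{A_i,A_j}^\mu+\sum _{i=1}^{n-1}I_{A_i,A_n}^\mu=\sum _{1\le i<j\le n}I_{A_i,A_j}^\mu$ yields \eqref{eq27}. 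Finally \eqref{eq28} drops out directly: $\bigcup _{i=1}^nA_i=A_1\cup\bigcup _{i=2}^nA_i$ is a disjoint union, so \eqref{eq26} gives $\mu\paren{\bigcup _{i=1}^nA_i}-\mu\paren{\bigcup _{i=2}^nA_i}=\mu (A_1)+I_{A_1,\bigcup _{i=2}^nA_i}^\mu$, and the auxiliary identity turns $I_{A_1,\bigcup _{i=2}^nA_i}^\mu$ into $\sum _{i=2}^nI_{A_1,A_i}^\mu$. (Alternatively, subtract the instance of \eqref{eq27} for $\bigcup _{i=2}^nA_i$ from the one for $\bigcup _{i=1}^nA_i$; the interference terms that survive are exactly the pairs containing the index $1$.)

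There is no deep obstacle here; the only point needing care is the bookkeeping. At each use of \eqref{eq26} one must confirm that the two sets involved are genuinely disjoint, so that the $\mu(A\cap B)$ correction in the definition of $I_{A,B}^\mu$ disappears, and grade-$2$ additivity \eqref{eq21} must be invoked only on mutually disjoint triples --- both of which hold throughout because the original $A_i$ are mutually disjoint.
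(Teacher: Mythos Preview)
Your proof is correct. The auxiliary distributivity identity $I_{A\cup B,C}^\mu=I_{A,C}^\mu+I_{B,C}^\mu$ is verified exactly as you wrote, and the inductions go through; the symmetry $I_{A,B}^\mu=I_{B,A}^\mu$ (obvious from the definition) lets the auxiliary identity act on the second slot when you derive \eqref{eq28}.

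Your route differs from the paper's. The paper does not isolate the distributivity of the interference term; instead it imports the formula $\mu\paren{\bigcup_{i=1}^n A_i}=\sum_{i<j}\mu(A_i\cup A_j)-(n-2)\sum_i\mu(A_i)$ from an external reference (Theorem~2.2(b) of \cite{gudms}) and then substitutes $\mu(A_i\cup A_j)=I_{A_i,A_j}^\mu+\mu(A_i)+\mu(A_j)$, with \eqref{eq28} obtained by subtracting two instances of \eqref{eq27}. Your argument is self-contained --- everything is derived directly from grade-$2$ additivity \eqref{eq21} with no outside citation --- and the distributivity lemma you extract is a clean structural fact worth recording in its own right. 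The paper's approach is shorter once the cited formula is granted, but that formula itself is proved by essentially the same induction you carry out, so the total work is comparable.
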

\begin{proof}
Applying Theorem~2.2(b) \cite{gudms} and \eqref{eq26} we have
\begin{align*}
\mu\paren{\bigcup _{i=1}^nA_i}&=\sum _{i<j=1}^n\mu (A_i\cup A_j)-(n-2)\sum _{i=1}^n\mu (A_i)\\
&=\sum _{i<j=1}^n\sqbrac{I_{A_i,A_j}^\mu +\mu (A_i)+\mu (A_j)}-(n-2)\sum _{i=1}^n\mu (A_i)\\
&=\sum _{i=1}^n\mu (A_i)+\sum _{i<j=1}^nI_{A_i,A_j}^\mu
\end{align*}
Equation~\eqref{eq28} follows from \eqref{eq27}
\end{proof}

Let $(\Omega ,\ascript ,\mu)$ be a $q$-measure space and let $f\colon\Omega\to\real ^+$ be a measurable function. We define the \textit{quantum integral} of $f$ to be
\begin{equation}         % equation (2.9)
\label{eq29}
\int fd\mu =\int _0^\infty\mu\paren{\brac{\omega\colon f(\omega )>\lambda}}d\lambda
\end{equation}
where $d\lambda$ is Lebesgue measure on $\real$ \cite{gud09}. If $f\colon\Omega\to\real$ is measurable, we can write $f$ in a canonical way as $f=f^+-f^-$ where $f^+\ge 0$, $f^-\ge 0$ are measurable and $f^+f^-=0$. We then define the quantum integral
\begin{equation}         % equation (2.10)
\label{eq210}
\int fd\mu =\int f^+d\mu-\int f^-d\mu
\end{equation}
as long as the two terms in \eqref{eq25} are not both $\infty$. If $\mu$ is an ordinary measure (that is, $\mu$ is additive), then $\int fd\mu$ is the usual Legesgue integral \cite{gud09}. The quantum integral need not be linear or monotone. That is, $\int (f+g)d\mu\ne\int fd\mu +\int gd\mu$ and $\int fd\mu\not\le\int gd\mu$ whenever $f\le g$, in general. However, the quantum integral is homogeneous in the sense that $\int\alpha fd\mu =\alpha\int fd\mu$, for all $\alpha\in\real$. If $\int\ab{f}d\mu <\infty$ we say that $f$ is integrable and we denote by $L_1(\Omega ,\mu )$ the set of integrable functions.

If $\mu$ is a measure on $\ascript$, then it is well known that the Lebesgue integral $f\mapsto\int fd\mu$ from
$L_1(\Omega ,\mu )$ to $\real$ is the unique linear functional satisfying $\int\chi _Ad\mu =\mu (A)$ for all 
$A\in\ascript$ where $\chi _A$ is the characteristic function of $A$ and if $f_i,f\in L_1(\Omega ,\mu )$ with
$0\le f_1\le f_2\le\cdots$, $\lim f_i=f$, then $\lim\int f_id\mu =\int fd\mu$.

We now show that the quantum integral is also the unique functional satisfying certain basic principles.

\begin{thm}       % Theorem 2.2
\label{thm22}
If $(\Omega ,\ascript ,\mu )$ is a $q$-measure space, then $F(f)=\int fd\mu$ is the unique functional
$F\colon L_1(\Omega ,\mu )\to\real$ satisfying the following conditions.

\noindent{\rm (i)}\enspace If $0\le\alpha\le\beta$, $\gamma\ge 0$ and $A\cap B =\emptyset$, then
\begin{equation*}
F\sqbrac{\alpha\chi _A+(\beta +\gamma )\chi _B}
  =\alpha\mu (A)+(\beta +\gamma )\mu (B)+u(\alpha ,\beta )I_{A,B}^\mu
\end{equation*}
for some function $u\colon\brac{(\alpha ,\beta )\in\real ^2\colon 0\le\alpha\le\beta}\to\real$.

\noindent{\rm (ii)}\enspace If $f,g,h\in L_1(\Omega ,\mu )$ are nonnegative and have mutually disjoint support, then
\begin{equation*}
F(f+g+h)=F(f+g)+F(f+h)+F(g+h)-F(f)-F(g)-F(h)
\end{equation*}
{\rm (iii)}\enspace If $f_i,f\in L_1(\Omega ,\mu )$ with $0\le f_1\le f_2\le\cdots$, $\lim f_i=f$, then
$\lim F(f_i)=F(f)$.\enspace
{\rm (iv)}\enspace $F(f)=F(f^+)-F(f^-)$.
\end{thm}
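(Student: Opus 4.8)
The plan is to establish the two directions separately: first that $F(f)=\int f\,d\mu$ actually satisfies (i)--(iv), and then that these four conditions force any such $F$ to coincide with the quantum integral. For the verification direction, conditions (iii) and (iv) are immediate from the definition \eqref{eq29}--\eqref{eq210} together with the monotone convergence property of the quantum integral already cited from \cite{gud09}. Condition (ii) is exactly grade-2 additivity of the quantum integral, which should follow from \eqref{eq21} applied to the superlevel sets $\{f+g+h>\lambda\}$ after noting that for nonnegative functions with mutually disjoint supports these sets decompose as disjoint unions of $\{f>\lambda\}$, $\{g>\lambda\}$, $\{h>\lambda\}$; integrating the pointwise identity in $\lambda$ gives (ii). For (i), I would compute $F\sqbrac{\alpha\chi_A+(\beta+\gamma)\chi_B}$ directly from \eqref{eq29}: the superlevel set at height $\lambda$ is $A\cup B$ for $\lambda<\alpha$, then $B$ for $\alpha\le\lambda<\beta+\gamma$, so the integral equals $\alpha\mu(A\cup B)+(\beta+\gamma-\alpha)\mu(B)$; rewriting $\mu(A\cup B)=\mu(A)+\mu(B)+I^\mu_{A,B}$ via \eqref{eq26} yields the stated form with $u(\alpha,\beta)=\alpha$ (in particular $u$ does not depend on $\beta$ for the quantum integral itself, but condition (i) only demands that \emph{some} $u$ work).

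For the uniqueness direction, suppose $F$ satisfies (i)--(iv). By (iv) it suffices to pin down $F$ on nonnegative functions. The strategy is to build up from simple functions. First, taking $\gamma=\beta=\alpha$ and $B=\emptyset$ in (i) gives $F(\alpha\chi_A)=\alpha\mu(A)$, so $F$ agrees with the quantum integral on nonnegative multiples of characteristic functions. Next I would handle nonnegative simple functions $s=\sum_{i=1}^n \alpha_i\chi_{A_i}$ with the $A_i$ mutually disjoint and $\alpha_1\le\alpha_2\le\cdots\le\alpha_n$. Here condition (ii) is the key engine: it lets one reduce an $n$-term sum to combinations of two-term sums, exactly paralleling the inclusion--exclusion identity in Lemma~\ref{lem21}. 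Combined with condition (i)'s formula for two-term sums of the special shape $\alpha\chi_A+(\beta+\gamma)\chi_B$ (which covers pairs of distinct values once we order them), an induction on $n$ should force
\begin{equation*}
F(s)=\sum_{i=1}^n\alpha_i\mu(A_i)+\sum_{i<j}I^\mu_{A_i,A_j}\cdot(\text{coefficient determined by the }\alpha\text{'s}),
\end{equation*}
and the point is that the coefficients are uniquely determined because the quantum integral $\int s\,d\mu$ is itself one admissible value of $F(s)$, so any other admissible $F$ must produce the same number. Concretely, one shows $F(s)=\int s\,d\mu$ by writing $s$ as a telescoping sum and repeatedly invoking (i) and (ii).

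The final step is to pass from simple functions to general $f\in L_1(\Omega,\mu)$ with $f\ge0$ using condition (iii): choose the standard increasing sequence of simple functions $s_k\uparrow f$, note $F(s_k)=\int s_k\,d\mu\to\int f\,d\mu$ by the known monotone convergence for the quantum integral, while $F(s_k)\to F(f)$ by (iii), hence $F(f)=\int f\,d\mu$. Then (iv) extends this to all of $L_1(\Omega,\mu)$.

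I expect the main obstacle to be the inductive computation of $F$ on simple functions --- specifically, verifying that conditions (i) and (ii) really do determine $F(s)$ uniquely for an arbitrary nonnegative simple function, rather than just for the two- and three-term cases. The subtlety is that condition (i) only gives the two-term formula for pairs of the restricted form $\alpha\chi_A+(\beta+\gamma)\chi_B$ with an ordering constraint, and one must check that the grade-2 recursion of (ii) can always be set up so that every two-term sum it generates falls under this restricted form (after suitable relabeling so the smaller coefficient sits on the first set). The telescoping decomposition $s=\sum_k(\alpha_k-\alpha_{k-1})\chi_{A_k\cup\cdots\cup A_n}$ paired with Lemma~\ref{lem21}'s identity \eqref{eq28} is the natural device to make this bookkeeping work, and the existence of the quantum integral as an explicit solution guarantees the recursion is consistent.
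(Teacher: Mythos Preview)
Your overall architecture matches the paper's: verify (i)--(iv) for the quantum integral, then for uniqueness reduce to nonnegative simple functions via (ii), handle two-term sums via (i), and pass to limits via (iii) and (iv). The verification direction is fine, and your direct computation showing the quantum integral satisfies (i) with $u(\alpha,\beta)=\alpha$ is correct.

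The gap is in the uniqueness direction, precisely at the point you flag as the ``main obstacle.'' Condition (i) only asserts that \emph{some} function $u$ exists for the given $F$; a priori a different $F$ could satisfy (i) with a different $u$ and hence produce different values on two-term simple functions. Your sentence ``the coefficients are uniquely determined because the quantum integral $\int s\,d\mu$ is itself one admissible value of $F(s)$, so any other admissible $F$ must produce the same number'' is circular: the existence of one solution to a system of constraints does not by itself imply uniqueness. Nothing in your outline forces $u(\alpha,\beta)=\alpha$ for an arbitrary $F$ satisfying (i)--(iv), and until $u$ is pinned down, the two-term formula from (i) does not determine $F(\alpha\chi_A+\beta\chi_B)$.

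The paper fills this gap by exploiting the redundancy built into condition (i) through the extra parameter $\gamma$. The function $\alpha\chi_A+\delta\chi_B$ (with $\alpha\le\delta$) can be represented in (i) with any $\beta\in[\alpha,\delta]$ and $\gamma=\delta-\beta$; since the left-hand side is the same, comparison forces $u(\alpha,\beta+\gamma)=u(\alpha,\beta)$ whenever $I^\mu_{A,B}\ne 0$ for some disjoint $A,B$, so $u(\alpha,\beta)=v(\alpha)$ is independent of $\beta$. Then setting $\alpha=\beta$, $\gamma=0$ and using $F(\alpha\chi_{A\cup B})=\alpha\mu(A\cup B)$ gives $v(\alpha)=\alpha$. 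Once this is established, the reduction of $F(\sum\alpha_i\chi_{A_i})$ via (ii) to pairwise terms $F(\alpha_i\chi_{A_i}+\alpha_j\chi_{A_j})$ and then via (i) to the explicit formula $\sum\alpha_i\mu(A_i)+\sum_{i<j}\alpha_i I^\mu_{A_i,A_j}$ goes through, and matching with Lemma~\ref{lem21} (equation \eqref{eq28}) completes the argument. Your telescoping idea is the right way to identify the result with $\int s\,d\mu$, but it only becomes effective after $u$ has been determined.
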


Before we present the proof of Theorem~\ref{thm22}, let us interpret the four conditions. Letting $\alpha =1$,
$\beta =\emptyset$ in (i) we obtain $F(\chi _A)=\mu (A)$ which shows that $F$ is an extension of $\mu$. Letting
$\gamma =0$ in (i) we obtain
\begin{equation}         % equation (2.11)
\label{eq211}
F(\alpha\chi _A+\beta\chi _B)=\alpha\mu (A)+\beta\mu (B)+u(\alpha ,\beta )I_{A,B}^\mu
\end{equation}
which is an extension of \eqref{eq26} for $\mu$. This also shows that $F$ is linear for the simple function
$\alpha\chi _A+\beta\chi _B$ except for an interference term. With $\gamma\ne 0$, we can write (i) as
\begin{equation*}
F\sqbrac{(\alpha\chi _A+\beta\chi _B)+\gamma\chi _B}=F(\alpha\chi _A+\beta\chi _B)+F(\gamma\chi _B)
\end{equation*}
which essentially states that $\gamma\chi _B$ does not interfere with $\alpha\chi _A+\beta\chi _B$. Condition~(ii) is grade-2 additivity for $F$ which is natural to expect for a functional extension of $\mu$. Condition~(iii) is a generalization of the continuity property \eqref{eq22}. Finally, Condition~(iv) is the natural way to extend $F$ from nonnegative functions to arbitrary real-valued functions in $L_1(\Omega ,\mu )$.

\begin{proof}
\enspace (of Theorem~\ref{thm22}). It is shown in \cite{gud09} that $f\mapsto\int fd\mu$ satisfies (i)--(iv) where (iii) is called the $q$-dominated monotone convergence theorem. Conversely, suppose $F$ satisfies (i)--(iv). Applying
\eqref{eq211} with $B=\emptyset$ we have that $F(\alpha\chi _A)=\alpha\mu (A)$ for all $A\in\ascript$. If
$I_{A,B}^\mu =0$ for all $A,B\in\ascript$, then $\mu$ is a measure and both $F(f)$ and $\int fd\mu$ are the Lebesgue integral of $f$. If $I_{A,B}^\mu\ne 0$ for $A,B\in\ascript$ with $A\cap B=\emptyset$ we have by \eqref{eq211} and (i) that
\begin{align*}
\alpha\mu (A)+(\beta +&\gamma)\mu (B)+u(\alpha ,\beta +\gamma )I_{A,B}^\mu
=F\sqbrac{\alpha\chi _A+(\beta +\gamma )\chi _B}\\
&=\alpha\mu (A)+(\beta +\gamma )\mu (B)+u(\alpha ,\beta )I_{A,B}^\mu
\end{align*}
Hence, $u(\alpha ,\beta +\gamma )=u(\alpha ,\beta )$ and it follows that $u(\alpha ,\beta )$ does not depend on
$\beta$. Letting $v(\alpha )=u(\alpha ,\beta )$ for all $0\le\alpha\le\beta$ we have that
\begin{equation*}
F(\alpha\chi _A+\beta\chi _B)=\alpha\mu (A)+\beta\mu (B)+v(\alpha )I_{A,B}^\mu
\end{equation*}
If $I_{A,B}^\mu\ne 0$, then letting $\alpha =\beta$ we obtain for $A\cap B=\emptyset$ that
\begin{align*}
\alpha\sqbrac{\mu (A)+\mu (B)}+\alpha I_{A,B}^\mu&=\alpha\mu (A\cup B)=F(\alpha\chi _{A\cup B})\\
&=F(\alpha\chi _A+\alpha\chi _B)\\
&=\alpha\sqbrac{\mu (A)+\mu (B)}+v(\alpha )I_{A,B}^\mu
\end{align*}
Hence, $v(\alpha )=\alpha$ for every $\alpha\ge 0$. We conclude that
\begin{equation}         % equation (2.12)
\label{eq212}
F(\alpha\chi _A+\beta\chi _B)=\alpha\mu (A)+\beta\mu (B)+\alpha I_{A,B}^\mu
\end{equation}
It follows from (ii) and induction that if $f_1,\ldots ,f_n$ have mutually disjoint support, then
\begin{equation}         % equation (2.13)
\label{eq213}
F\paren{\sum f_i}=\sum _{i<j=1}^nF(f_i+f_j)-(n-2)\sum _{i=1}^nF(f_i)
\end{equation}
If $0\le\alpha _1\le\cdots\le\alpha _n$ and $A_i\in\ascript$ are mutually disjoint, it follows from
\eqref{eq212}, \eqref{eq213} that
\begin{align*}
F\paren{\sum _{i=1}^n\alpha _i\chi _{A_i}}
 & =\sum _{i<j=1}^nF(\alpha _i\chi _{A_i}+\alpha _j\chi _{A_j})-(n-1)\sum _{i=1}^n\alpha _i\mu (A_i)\\
 &=\sum _{i<j=1}^n\sqbrac{\alpha _i\mu (A_i)+\alpha _j\mu (A_j+\alpha _iI_{A_i,A_j}^\mu}
 -(n-2)\sum _{i=1}^n\alpha _i\mu (A_i)\\
 &=\sum _{i=1}^n\alpha _i\mu (A_i)+\sum _{i<j=1}^n\alpha _iI_{A_i,A_j}^\mu
\end{align*}
It is shown in \cite{gudms} that
\begin{align*}
\int\paren{\sum _{i=1}^n\alpha _i\chi _{A_i}}d\mu
&=\alpha _1\sqbrac{\mu\paren{\bigcup _{i=1}^nA_i}-\mu\paren{\bigcup _{i=2}^nA_i}}\\
&\quad +\cdots +\alpha _{n-1}\sqbrac{\mu (A_{n-1}\cup A_n)-\mu (A_n)}+\alpha _n\mu (A_n)
\end{align*}
By \eqref{eq28} and similar expressions for the other terms, we conclude that
\begin{equation*}
\int\paren{\sum _{i=1}^n\alpha _i\chi _{A_i}}d\mu =F\paren{\sum _{i=1}^n\alpha _i\chi _{A_i}}
\end{equation*}
Hence, $F(f)=\int fd\mu$ for every nonnegative simple function. It follows from (iii) and the $q$-dominated monotone convergence theorem that $F(f)=\int fd\mu$ for every nonnegative $f\in L_1(\Omega ,\mu )$. By (iv), $F(f)=\int fd\mu$
for every $f\in L_1(\Omega ,\mu )$.
\end{proof}

\section{Quantum Computers} % Section 3
This section considers measurement based quantum computers \cite{bbdrv09, gespg07,joz09}. The theory is simpler than that in Section~2 because the Hilbert space $H$ is finite dimensional and the sample space $\Omega$ is finite. As discussed in Section~1, we have $n$ measurements given by projection-valued measures $P_1,\ldots ,P_n$. Let $\oscript _i$ be the set of possible outcomes for measurements $P_i$, $i=1,\ldots ,n$. Then $\oscript _i$ is a finite set with cardinality $\ab{\oscript _i}=m_i$, $i=1,\ldots ,n$, where $m_i\le\dim H$. Using the notation
$P_i(a)=P_i\paren{\brac{a}}$, since $P_i$ is a projection-valued measure, we have that
\begin{equation}         % equation (3.1)
\label{eq31}
\sum\brac{P_i(a)\colon a\in\oscript _i}=I
\end{equation}
$i=1,\ldots ,n$, where $I$ is the identity operator. Also, if $A,B\subseteq\oscript _i$ with $A\cap B=\emptyset$ it follows that $P_i(A)P_i(B)=0$. For $A_i\subseteq\oscript _i$, $i=1,\ldots ,n$, we call $A_1\times\cdots\times A_n$ a
\textit{homogeneous event} (or \textit{course-grained history}) and for $a_i\in\oscript _i$ we call
\begin{equation*}
\brac{a_1}\times\cdots\times\brac{a_n}=(a_1,\ldots ,a_n)
\end{equation*}
a \textit{path} (or \textit{trajectory} or \textit{fine-grained history}). Let $\Omega$ be the set of all paths and
$\ascript =2^\Omega$ the set of \textit{events}. Then $\ab{\Omega}=m_1\cdots m_n$ and
$\ab{\ascript}=2^{m_1\cdots m_n}$.

In accordance with Section~1, we have an initial state $W$ given by a density operator on $H$ and unitary operators $U_i$ on $H$ describing quantum gates, $i=1,\ldots ,n$. For two paths $\omega =(a_1,\ldots ,a_n)$ and
$\omega '=(b_1,\ldots ,b_n)$, the \textit{decoherence functional} $D(\omega ,\omega ')$ is defined by
\begin{align}         % equation (3.2)
\label{eq32}
D&(\omega ,\omega ')\\
&=\rmtr\sqbrac{WU_1^*P_1(a_1)U_2^*P_2(a_2)
  \cdots U_n^*P_n(a_n)P_n(b_n)U_n\cdots P_2(b_2)U_2P_1(b_1)U_1}\notag
\end{align}
Notice that $D(\omega ,\omega ')=0$ if $a_n\ne b_n$; that is, the paths don't end at the same point. For
$A,B\in\ascript$, we extend the decoherence functional by bilinearity to get \cite{djs10}
\begin{equation*}
D(A,B)=\sum\brac{D(\omega ,\omega ')\colon\omega\in A,\omega '\in B}
\end{equation*}
Then $D$ satisfies the usual properties (1)--(5) (Section~2) of a decoherence functional. It follows that
$\mu (A)=D(A,A)$ is a $q$-measure on $\ascript$. If $W=\ket{\psi}\bra{\psi}$ is a pure state, we have
\begin{equation}         % equation (3.3)
\label{eq33}
\mu (\omega )=D(\omega ,\omega )
=\doubleab{P_n(a_n)U_nP_{n-1}(a_{n-1})U_{n-1}\cdots U_2P_1(a_1)U_1\psi}^2
\end{equation}
As in Section~2, we define the $(\omega ,\omega ')$ \textit{interference term} by
\begin{equation*}
I_{\omega ,\omega '}^\mu =\mu\paren{\brac{\omega ,\omega '}}-\mu (\omega )-\mu (\omega ')
  =2\rmre D(\omega ,\omega ')
\end{equation*}
We then have that
\begin{align}         % equation (3.4)
\label{eq34}
\mu (A)&=\sum\brac{D(\omega ,\omega ')\colon\omega ,\omega '\in A}\notag\\
  &=\sum\brac{\mu (\omega )\colon\omega\in A}
  +\sum\brac{2\rmre D(\omega ,\omega ')\colon\brac{\omega ,\omega '}\subseteq A}\notag\\
  &=\sum\brac{\mu (\omega )\colon\omega\in A}
  +\sum\brac{I_{\omega ,\omega '}^\mu\colon\brac{\omega ,\omega '}\subseteq A}
\end{align}
Although the next result is well known, we include the proof because it is particularly simple in this case.

\begin{thm}       % Theorem 3.1
\label{thm31}
The $q$-measure $\mu$ satisfies the regularity conditions \eqref{eq24}, \eqref{eq25}.
\end{thm}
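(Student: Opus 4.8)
The plan is to exploit the explicit form of $\mu$ as $\mu(A)=D(A,A)$ together with the Cauchy–Schwarz-type inequality (Condition (4) of a decoherence functional) and the expansion \eqref{eq34}. For the first regularity condition \eqref{eq24}, suppose $\mu(A)=0$ and $A\cap B=\emptyset$. Since $\mu(A)=D(A,A)=0$, Condition (4) gives $\ab{D(\omega,\omega')}^2\le D(A,A)D(B,B)=0$ for any $\omega\in A$ and any $\omega'\in\Omega$ (applying it with the singleton events, or more directly with $\{\omega\}$ and $B$), hence $D(\omega,\omega')=0$ whenever $\omega\in A$. In particular every interference term $I^\mu_{\omega,\omega'}=2\rmre D(\omega,\omega')$ with $\omega\in A$ vanishes. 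Now apply \eqref{eq34} to $A\cup B$: the sum $\sum\{\mu(\omega):\omega\in A\cup B\}$ splits as $\sum_{\omega\in A}\mu(\omega)+\sum_{\omega\in B}\mu(\omega)$, but $\mu(\omega)\le\mu(A)=0$ forces $\mu(\omega)=0$ for $\omega\in A$ (again by Condition (3) and (4) applied to singletons, or by grade-2 additivity), so that piece contributes nothing; and the interference sum over pairs in $A\cup B$ reduces to the interference sum over pairs in $B$ since any pair meeting $A$ contributes $0$. Thus \eqref{eq34} yields $\mu(A\cup B)=\sum\{\mu(\omega):\omega\in B\}+\sum\{I^\mu_{\omega,\omega'}:\{\omega,\omega'\}\subseteq B\}=\mu(B)$.

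For the second regularity condition \eqref{eq25}, suppose $\mu(A\cup B)=D(A\cup B,A\cup B)=0$ with $A\cap B=\emptyset$. By bilinearity $D(A\cup B,A\cup B)=D(A,A)+D(B,B)+2\rmre D(A,B)$, i.e. $\mu(A)+\mu(B)+2\rmre D(A,B)=0$. Condition (3) gives $\mu(A)\ge 0$ and $\mu(B)\ge 0$, and Condition (4) gives $\ab{D(A,B)}^2\le\mu(A)\mu(B)$, hence $2\ab{\rmre D(A,B)}\le 2\ab{D(A,B)}\le 2\sqrt{\mu(A)\mu(B)}\le\mu(A)+\mu(B)$ by the AM–GM inequality. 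Therefore $\mu(A)+\mu(B)+2\rmre D(A,B)\ge \mu(A)+\mu(B)-2\sqrt{\mu(A)\mu(B)}=\paren{\sqrt{\mu(A)}-\sqrt{\mu(B)}}^2\ge 0$, and this quantity equals $0$ only when $\sqrt{\mu(A)}=\sqrt{\mu(B)}$, i.e. $\mu(A)=\mu(B)$.

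The only mildly delicate point is the step in \eqref{eq24} where I claim $D(\omega,\omega')=0$ for $\omega\in A$ from $\mu(A)=0$: I would get this by noting that $\mu$ restricted to singletons inside $A$ must vanish (since $0\le\mu(\omega)=D(\{\omega\},\{\omega\})$ and, by Condition (4) applied to $\{\omega\}\subseteq A$ — or rather by monotonicity arguments on the nonnegative quadratic form $D$ — these singleton ``masses'' cannot exceed $\mu(A)=0$), and then applying Condition (4) again to the pair $\{\omega\}$, $\{\omega'\}$. In the finite setting of Section~3 this is completely transparent because $D$ is literally a positive semidefinite Gram-type matrix $D(\omega,\omega')$, so $D(A,A)=0$ means the corresponding sub-block is zero, which kills every entry with an index in $A$. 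I expect the positive-semidefiniteness viewpoint to be the cleanest way to organize the whole argument, with \eqref{eq34} used only to package the conclusion of \eqref{eq24}.
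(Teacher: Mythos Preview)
Your argument for \eqref{eq25} is correct and matches the paper's proof essentially line for line.

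Your argument for \eqref{eq24}, however, contains a genuine gap. You claim that $\mu(A)=0$ forces $\mu(\omega)=0$ for every $\omega\in A$, and in the final paragraph that ``$D(A,A)=0$ means the corresponding sub-block is zero.'' Neither statement is true. For a positive semidefinite matrix $D$, the condition $D(A,A)=\sum_{\omega,\omega'\in A}D(\omega,\omega')=0$ says only that the vector $\mathbf{1}_A$ lies in the kernel of $D$; it does \emph{not} force the $A\times A$ sub-block to vanish, nor the diagonal entries $\mu(\omega)$ to vanish. A concrete counterexample appears in the paper's own two-slit example with $\psi=\tfrac{1}{\sqrt{2}}\ket{0}+\tfrac{1}{\sqrt{2}}\ket{1}$: there $A=\brac{01,11}$ has $\mu(A)=0$, yet $\mu(01)=\mu(11)=\tfrac14$ and $D(01,11)=-\tfrac14\ne0$. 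So your pathwise decomposition via \eqref{eq34} cannot be completed by killing the individual terms indexed by $A$.

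The fix is simpler than your route. Apply Condition~(4) directly to the events $A$ and $B$ (not to singletons): from $\mu(A)=0$ one gets $\ab{D(A,B)}^2\le D(A,A)\,D(B,B)=0$, hence $D(A,B)=0$. Then bilinearity of $D$ gives
\begin{equation*}
\mu(A\cup B)=D(A\cup B,A\cup B)=D(A,A)+2\rmre D(A,B)+D(B,B)=0+0+\mu(B).
\end{equation*}
This is exactly the paper's argument. Your positive-semidefiniteness intuition \emph{does} yield $D(A,B)=0$ (since $D\mathbf{1}_A=0$ implies $\mathbf{1}_B^*D\mathbf{1}_A=0$), but you pushed it one step too far in asserting that the sub-block itself vanishes.
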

\begin{proof}
To prove \eqref{eq24}, suppose that $\mu (A)=0$ and $A\cap B=\emptyset$. Applying Condition~(4) of Section~2 we conclude that $D(A,B)=0$. Hence,
\begin{align*}
\mu (A\cup B)&=\sum\brac{D(\omega ,\omega ')\colon\omega ,\omega '\in A\cup B}\\
&=\mu (A)+\mu (B)+2\rmre D(A,B)=\mu (B)
\end{align*}
To prove \eqref{eq25} suppose that $A\cap B=\emptyset$ and $\mu (A\cup B)=0$. Again, applying Condition~(4) we have that
\begin{align*}
0&=\mu (A\cup B)=\mu (A)+\mu (B)+2\rmre D(A,B)\ge\mu (A)+\mu (B)-2\ab{D(A,B)}\\
&\ge\mu (A)+\mu (B)-2\mu (A)^{\frac{1}{2}}\mu (B)^{\frac{1}{2}}=\sqbrac{\mu (A)^{\frac{1}{2}}-\mu (B)^{\frac{1}{2}}}^2
\end{align*}
Hence, $\mu (A)^{\frac{1}{2}}-\mu (B)^{\frac{1}{2}}=0$ so that $\mu (A)=\mu (B)$.
\end{proof}

Although the next result is elementary, it does not seem to be well known. This result shows that the sum of the quantum measure of the paths is unity and that the total interference vanishes. This indicates that at least one of the outcomes $\omega\in\Omega$ occurs and that constructive interference is always balanced by an equal amount of destructive interference.
\begin{thm}       % Theorem 3.2
\label{thm32}
For the $q$-measure space $(\Omega ,\ascript ,\mu )$ we have
\begin{equation*}
\sum\brac{\mu (\omega )\colon\omega\in\Omega}=1,\quad
  \sum\brac{I_{\omega ,\omega '}^\mu\colon\brac{\omega ,\omega '}\in\ascript}=0
\end{equation*}
\end{thm}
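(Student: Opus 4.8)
The plan is to derive the second identity from the first together with the expansion \eqref{eq34}, and to prove the first identity, $\sum\brac{\mu(\omega)\colon\omega\in\Omega}=1$, by a direct telescoping computation with the trace formula \eqref{eq32}.

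First I would observe that $\mu(\Omega)=D(\Omega,\Omega)=1$. Expanding $D(\Omega,\Omega)$ by bilinearity, \eqref{eq32} produces a sum over all $a_1,\ldots,a_n,b_1,\ldots,b_n$; since the $a_i$-sums and $b_i$-sums are independent, each factor $P_i(a_i)$ or $P_i(b_i)$ may be replaced by $\sum\brac{P_i(a)\colon a\in\oscript_i}=I$ using \eqref{eq31}. Every projection factor thus disappears, and the unitarity relations $U_i^*U_i=I$ collapse the remaining product $WU_1^*U_2^*\cdots U_n^*U_n\cdots U_2U_1$ from the innermost pair outward, leaving $\rmtr(W)=1$.

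Next, for $\sum\brac{\mu(\omega)\colon\omega\in\Omega}$, write $\omega=(a_1,\ldots,a_n)$ and use $P_n(a_n)^2=P_n(a_n)$ in \eqref{eq32}, so that the innermost block $U_n^*P_n(a_n)P_n(a_n)U_n$ is simply $U_n^*P_n(a_n)U_n$. Summing over $a_n$ and applying \eqref{eq31} turns this into $U_n^*U_n=I$; the middle of the expression then reads $\cdots U_{n-1}^*P_{n-1}(a_{n-1})P_{n-1}(a_{n-1})U_{n-1}\cdots$, so the same step applies with $n$ replaced by $n-1$. Summing successively over $a_{n-1},\ldots,a_1$ leaves $\rmtr(WU_1^*U_1)=\rmtr(W)=1$. (When $W=\ket{\psi}\bra{\psi}$ is pure, this is just the iterated Pythagorean identity $\sum\brac{\doubleab{P_i(a)\phi}^2\colon a\in\oscript_i}=\doubleab{\phi}^2$ together with $\doubleab{U_i\phi}=\doubleab{\phi}$ applied to \eqref{eq33}; the mixed case follows because $D(\omega,\omega)$, hence $\mu(\omega)$, is linear in $W$.)

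Finally, applying \eqref{eq34} with $A=\Omega$ gives
\begin{equation*}
1=\mu(\Omega)=\sum\brac{\mu(\omega)\colon\omega\in\Omega}+\sum\brac{I_{\omega,\omega'}^\mu\colon\brac{\omega,\omega'}\in\ascript}=1+\sum\brac{I_{\omega,\omega'}^\mu\colon\brac{\omega,\omega'}\in\ascript},
\end{equation*}
so the total interference $\sum\brac{I_{\omega,\omega'}^\mu\colon\brac{\omega,\omega'}\in\ascript}$ vanishes. The only point requiring mild care is the bookkeeping in the telescoping: one must collapse the squared projection in the middle \emph{before} each successive summation and check that the unitaries pair off correctly. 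Otherwise the argument is entirely routine.
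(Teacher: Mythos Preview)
Your proof is correct and follows essentially the same approach as the paper: compute $\sum_\omega\mu(\omega)$ by iteratively collapsing the projections via \eqref{eq31} and unitarity, verify $\mu(\Omega)=1$ the same way, and then read off the vanishing of the total interference from \eqref{eq34}. The only cosmetic difference is that the paper argues first for a pure state $W=\ket{\psi}\bra{\psi}$ using the norm formula \eqref{eq33} and then invokes convexity, whereas you work directly with a general $W$ through the trace formula \eqref{eq32}; the underlying telescoping is identical.
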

\begin{proof}
We prove this result for a pure state $W=\ket{\psi}\bra{\psi}$ and the general result follows because any state is a convex combination of pure states. Applying \eqref{eq33} we have
\begin{align*}
\sum&\mu (\omega )\\
&=\sum\brac{\doubleab{P_n(a_n)U_nP_{n-1}(a_{n-1})U_{n-1}\cdots U_2P_1(a_1)U_1\psi}^2
\colon (a_1,\ldots ,a_n)\in\Omega}
\end{align*}
By \eqref{eq31} we have that
\begin{align*}
&\sum\mu (\omega )\\
&=\sum\brac{\doubleab{U_nP_{n-1}(a_{n-1})U_{n-1}\cdots U_2P_1(a_1)U_1\psi}^2
\!\colon\!\!(a_2,\ldots ,a_n)\in\oscript _2\times\cdots\times\oscript _n}\\
&\quad\vdots\\
&=\doubleab{U_nU_{n-1}\cdots U_2U_1\psi}^2=1
\end{align*}
Applying \eqref{eq32} we have that
\begin{equation*}
\mu (\Omega )=\sum\brac{D(\omega ,\omega ')\colon\omega ,\omega '\in\Omega}=1
\end{equation*}
Hence, by \eqref{eq34} and what we just proved we obtain
\begin{equation*}
1=\mu (\Omega )=1+\sum\brac{I_{\omega ,\omega '}^\mu\colon\brac{\omega ,\omega '}\in\ascript}
\end{equation*}
and the result follows.
\end{proof}

The next theorem is a straightforward application of \eqref{eq32}

\begin{thm}       % Theorem 3.3
\label{thm33}
{\rm (a)}\enspace If $W=\ket{\psi}\bra{\psi}$ is a pure state and $A_1\times\cdots\times A_n$ is a homogeneous event, then
\begin{equation*}
\mu (A_1\times\cdots\times A_n)=\doubleab{P_n(A_n)U_nP_{n-1}(A_{n-1})\cdots P_1(A_1)U_1\psi}^2
\end{equation*}
{\rm (b)}\enspace If $B_i\in\ascript$ are mutually disjoint, then
\begin{equation*}
\mu\paren{A_1\times\cdots\times A_{n-1}\times\paren{\cup B_i}}
=\sum _i\mu (A_1\times\cdots\times A_{n-1}\times B_i)
\end{equation*}
\end{thm}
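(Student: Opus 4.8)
The plan is to unwind the definition of the decoherence functional in \eqref{eq32} and push the projection–valued measure property \eqref{eq31} through the expression, just as in the proof of Theorem~\ref{thm32}. For part (a), I would first write $\mu(A_1\times\cdots\times A_n)=D(A_1\times\cdots\times A_n,A_1\times\cdots\times A_n)$ and expand this by bilinearity as a sum over $\omega=(a_1,\ldots,a_n)$ and $\omega'=(b_1,\ldots,b_n)$ with $a_i,b_i\in A_i$. Using $P_i(A_i)=\sum\{P_i(a_i):a_i\in A_i\}$, which holds because $P_i$ is a projection-valued measure, the double sum collapses to a single trace $\rmtr\sqbrac{WU_1^*P_1(A_1)U_2^*\cdots P_n(A_n)^2 U_n\cdots P_1(A_1)U_1}$; since $P_n(A_n)$ is a projection, $P_n(A_n)^2=P_n(A_n)$, and for the pure state $W=\ket{\psi}\bra{\psi}$ this trace equals $\doubleab{P_n(A_n)U_nP_{n-1}(A_{n-1})\cdots P_1(A_1)U_1\psi}^2$, exactly as in the derivation of \eqref{eq33}.

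For part (b), the cleanest route is to apply part (a) twice. Writing $\cup_i B_i$ in place of $A_n$, part (a) gives $\mu(A_1\times\cdots\times A_{n-1}\times(\cup_i B_i))=\doubleab{P_n(\cup_iB_i)U_n\cdots P_1(A_1)U_1\psi}^2$. Since the $B_i$ are mutually disjoint, $P_n(\cup_iB_i)=\sum_iP_n(B_i)$ and the cross terms $P_n(B_i)P_n(B_j)=0$ for $i\ne j$ vanish, so by the Pythagorean identity for orthogonal vectors the norm squared splits as $\sum_i\doubleab{P_n(B_i)U_n\cdots P_1(A_1)U_1\psi}^2$, and each summand is $\mu(A_1\times\cdots\times A_{n-1}\times B_i)$ by part (a) again. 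I should note that (b) as stated is for a pure state if I route it through (a); the general density-operator case follows by taking convex combinations, exactly as in Theorem~\ref{thm32}, since both sides of (b) are linear in $W$ — alternatively one can prove (b) directly from the bilinear expansion of $D$ and the orthogonality $P_n(B_i)P_n(B_j)=0$, which gives $D$ block-diagonal in the last coordinate, hence $\mu$ additive there.

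The main obstacle — really the only thing requiring care — is the bookkeeping in the iterated collapse of the double sum in part (a): one must check that regrouping $\sum_{a_i\in A_i}\sum_{b_i\in A_i}$ of the product of traces indeed yields $P_i(A_i)$ on both the "bra side" and the "ket side" of each $U_i$, and that the telescoping with \eqref{eq31} in Theorem~\ref{thm32} is the special case $A_i=\oscript_i$. Once the operator identity $D(A_1\times\cdots\times A_n,A_1\times\cdots\times A_n)=\rmtr\sqbrac{WU_1^*P_1(A_1)U_2^*\cdots U_n^*P_n(A_n)U_n\cdots U_2P_1(A_1)U_1}$ is in hand, everything else is the observation that $P_n(A_n)$ is idempotent and that $\rmtr(WX^*X)=\doubleab{X\psi}^2$ for $W=\ket\psi\bra\psi$. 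I would write out part (a) in two or three displayed lines and then dispatch part (b) in two lines using orthogonality.
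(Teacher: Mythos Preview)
Your proposal is correct and is precisely the ``straightforward application of \eqref{eq32}'' that the paper invokes in lieu of a written proof: the paper gives no argument beyond that remark, and your bilinear expansion of $D$ followed by collapsing the sums via $P_i(A_i)=\sum_{a_i\in A_i}P_i(a_i)$ and the idempotence of $P_n(A_n)$ is exactly what is meant. Your handling of (b) via orthogonality of the ranges of $P_n(B_i)$, with the convex-combination extension to mixed $W$, is also the intended route.
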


The result of Theorem~\ref{thm33}(b) is consistent with the fact that the last measurement does not affect previous ones. We now make two observations. Since
\begin{equation*}
\ab{I_{\omega ,\omega '}^\mu}
=2\ab{\rmre D(\omega ,\omega ')}\le 2\mu (\omega )^{\frac{1}{2}}\mu (\omega ')^{\frac{1}{2}}
\end{equation*}
we have the inequalities
\begin{equation}         % equation (3.5)
\label{eq35}
\sqbrac{\mu (\omega )^{\frac{1}{2}}-\mu (\omega ')^{\frac{1}{2}}}^2\le\mu\paren{\brac{\omega ,\omega '}}
\le\sqbrac{\mu (\omega )^{\frac{1}{2}}+\mu (\omega ')^{\frac{1}{2}}}^2
\end{equation}
Finally, it is not hard to show that the quantum integral becomes
\begin{align}         % equation (3.6)
\label{eq36}
\int&fd\mu\\
&=\sum\brac{f(\omega )\mu(\omega )\colon\omega\in\Omega}
+\sum\brac{I_{\omega ,\omega '}^\mu\min\paren{f(\omega ),f(\omega ')}\colon\brac{\omega ,\omega '}\in\ascript}
\notag
\end{align}

We close with an example of a simplified two-slit experiment. Suppose we have two measurements $P_1,P_2$ that have two values $a_1,a_2$ and $b_1,b_2$, respectively. We can assume that $\dim H=2$. We interpret $a_1,a_2$ as two slits and $b_1,b_2$ as two detectors on a detection screen. Suppose we have an initial pure state
$W=\ket{\psi _0}\bra{\psi _0}$. Letting $\psi =U_1\psi _0$ and $U=U_2$ for a homogeneous history $A_1\times A_2$ by Theorem~\ref{thm33}(a) we have that
\begin{equation}         % equation (3.7)
\label{eq37}
\mu (A_1\times A_2)=\doubleab{P_2(A_2)UP_1(A_1)\psi}^2
\end{equation}
We have the four paths $(a_i,b_j)$, $i,j=1,2$ and \eqref{eq37} gives
\begin{equation*}
\mu\paren{(a_i,b_j)}=\doubleab{P_2(b_j)UP_1(a_i)\psi}^2,\quad i,j=1,2
\end{equation*}
We see directly or by Theorem~\ref{thm32} that
\begin{equation*}
\sum _{i,j=1}^2\mu\paren{(a_i,b_j)}=1
\end{equation*}
so at least one of the paths occurs.

The ``probability'' that detector $b_1$ registers is
\begin{equation*}
\mu\paren{\brac{a_i,a_2}\times\brac{b_1}}=\doubleab{P_2(b_1)U\psi}^2
\end{equation*}
and similarly
\begin{equation*}
\mu\paren{\brac{a_i,a_2}\times\brac{b_2}}=\doubleab{P_2(b_2)U\psi}^2
\end{equation*}
Since
\begin{equation*}
\doubleab{P_2(b_1)U\psi}^2+\doubleab{P_2(b_2)U\psi}^2=1
\end{equation*}
one of the detectors registers. Letting $\omega =(a_1,b_1),\omega '=(a_2,b_1)$, the $(\omega ,\omega ')$ interference term becomes
\begin{equation*}
I_{\omega ,\omega '}^\mu = 2\rmre D(\omega ,\omega ')=2\rmre\elbows{P_1(a_2)U^*P_2(b_1)UP_1(a_1)\psi ,\psi}
\end{equation*}
In general $I_{\omega ,\omega '}^\mu\ne 0$ and hence,
$\mu\paren{\brac{\omega ,\omega '}}\ne\mu (\omega )+\mu (\omega ')$. Thus, the two paths $\omega ,\omega '$ ending at detector $b_1$ interfere. Similarly, the two paths ending at detector $b_2$ interfere.

The ``probability'' that the particle goes through slit $a_1$ is
\begin{equation*}
\mu\paren{\brac{a_1}\times\brac{b_1,b_2}}=\doubleab{UP_1(a_1)\psi}^2=\doubleab{P_1(a_1)\psi}^2
\end{equation*}
and the ``probability'' that the particle goes through slit $a_2$ is
\begin{equation*}
\mu\paren{\brac{a_2}\times\brac{b_1,b_2}}=\doubleab{UP_1(a_2)\psi}^2=\doubleab{P_1(a_2)\psi}^2
\end{equation*}
Since $\doubleab{P_1(a_1)\psi}^2+\doubleab{P_1(a_2)\psi}^2=1$, the particle goes through one of the slits. Again,
\begin{equation*}
\mu\paren{\brac{a_1}\times\brac{b_1,b_2}}=\mu\paren{(a_1,a_2)}+\mu\paren{(a_1,b_2)}
\end{equation*}
so the detectors do not interfere with the slits.

Now $\mu (\emptyset )=0$ and $\mu (\Omega )=1$ so we have $\mu (A)$ for the ten homogeneous events out of the
$\ab{\ascript}=2^4=16$ events in $\ascript$. We have two remaining doubleton sets $\brac{\omega _1,\omega _2}$,
$\brac{\omega _3,\omega _4}$ where $\omega _1=(a_1,b_1)$, $\omega _2=(a_2,b_2)$, $\omega _3=(a_1,b_2)$,
$\omega _4=(a_2,b_1)$. It follows from \eqref{eq32} that in general, any two paths that end at different locations do not interfere. Hence, 
\begin{equation*}
\mu\paren{\brac{\omega _1,\omega _2}}=\mu(\omega _1)+(\omega _2)
\end{equation*}
and similarly
\begin{equation*}
\mu\paren{\brac{\omega _3,\omega _4}}=\mu(\omega _3)+(\omega _4)
\end{equation*}
Finally, we have the four 3-element sets. Applying \eqref{eq34} gives
\begin{align*}
\mu\paren{\brac{(a_1,b_1),(a_1,b_2),(a_2,b_1)}}
  &=\mu\paren{(a_1,b_1)}+\mu\paren{(a_1,b_2)}+\mu\paren{(a_2,b_1)}\\
  &\quad +2\rmre D\paren{(a_1,b_1),(a_2,b_1)}\\
  &=\mu\paren{(a_1,b_1)}+\mu\paren{(a_1,b_2)}+\mu\paren{a_2,b_1)}\\
  &\quad +2\rmre\elbows{P_1(a_2)U^*P_2(b_1)UP_1(a_1)\psi ,\psi}
\end{align*}
In a similar way,
\begin{align*}
\mu\paren{\brac{(a_1,b_1),(a_1,b_2),(a_2,b_2)}}
  &=\mu\paren{(a_1,b_1)}+\mu\paren{(a_1,b_2)}+\mu\paren{(a_2,b_2)}\\
  &\quad +2\rmre D\paren{(a_1,b_2),(a_2,b_2)}\\
  \mu\paren{\brac{(a_1,b_1),(a_2,b_1),(a_2,b_2)}}
  &=\mu\paren{(a_1,b_1)}+\mu\paren{(a_2,b_1)}+\mu\paren{(a_2,b_2)}\\
  &\quad +2\rmre D\paren{(a_1,b_1),(a_2,b_1)}\\
  \mu\paren{\brac{(a_1,b_2),(a_2,b_1),(a_2,b_2)}}
  &=\mu\paren{(a_1,b_2)}+\mu\paren{a_2,b_1)}+\mu\paren{(a_2,b_2)}\\
  &\quad +2\rmre D\paren{(a_1,b_2),(a_2,b_2)}
\end{align*}

To illustrate this example more concretely, we employ the binary notation that is used in the quantum computation literature. We denote the four paths by $00,01,10,11$ and the computational basis for $H=\complex ^2$ by
$\ket{0}$ and $\ket{1}$. The measurements $P_1=P_2$ are relative to the computational basis and are given by
$P(0)=\ket{0}\bra{0}$, $P(1)=\ket{1}\bra{1}$. Suppose $U$ is the Hadamard matrix
\begin{equation*}
U=\frac{1}{\sqrt{2\,}\,}
\left[\begin{matrix}\noalign{\smallskip}1&1\\\noalign{\smallskip}1&-1\\\noalign{\smallskip}\end{matrix}\right]
\end{equation*}
If the initial state is $\psi =\ket{0}$, we have
\begin{align*}
\mu (00)&=\doubleab{\ket{0}\bra{0}U\ket{0}\bra{0}\ket{0}}^2=\tfrac{1}{2}\\
\mu (01)&=\doubleab{\ket{1}\bra{1}U\ket{0}\bra{0}\ket{0}}^2=\tfrac{1}{2}\\
\mu (10)&=\doubleab{\ket{1}\bra{0}U\ket{1}\bra{1}\ket{0}}^2=0\\
\mu (11)&=\doubleab{\ket{1}\bra{1}U\ket{1}\bra{1}\ket{0}}^2=0
\end{align*}
Since $I_{\omega ,\omega '}^\mu=0$ for all $\omega ,\omega '$ there are no interferences. Hence, $\mu$ is a classical measure generated by the previous four measure values.

Next, suppose the initial state is the uniform superposition
\begin{equation*}
\psi =\frac{1}{\sqrt{2\,}\,}\,\ket{0}+\frac{1}{\sqrt{2\,}\,}\,\ket{1}
\end{equation*}
Then
\begin{equation*}
\mu (00)=\doubleab{\,\ket{0}\,\frac{1}{\sqrt{2\,}\,}\,\bra{0}\psi}^2=\frac{1}{4}
\end{equation*}
and similarly, $\mu (01)=\mu (10)=\mu (11)=1/4$. The interference terms become
\begin{align*}
I_{00,01}^\mu&=I_{00,11}^\mu =I_{01,00}^\mu =I_{10,11}^\mu =0\\
I_{00,10}^\mu&=2\rmre\elbows{\ket{1}\bra{1}U^*\ket{0}\bra{0}U\ket{0}\bra{0}\psi ,\psi}=\tfrac{1}{2}\\
I_{01,11}^\mu&=2\rmre\elbows{\ket{1}\bra{1}U^*\ket{1}\bra{1}U\ket{0}\bra{0}\psi ,\psi}=-\tfrac{1}{2}
\end{align*}
The measures of the doubleton events become
\begin{align*}
\mu\paren{\brac{00,01}}
&=\mu\paren{\brac{00,11}}=\mu\paren{\brac{01,10}}=\mu\paren{\brac{10,11}}=\tfrac{1}{2}\\
\mu\paren{\brac{00,10}}=1,\mu\paren{\brac{01,11}}=0
\end{align*}
and the measures of the tripleton events become
\begin{align*}
\mu\paren{\brac{00,01,00}}&=\tfrac{1}{2}+1+\tfrac{1}{2}-\tfrac{3}{4}=\tfrac{5}{4}\\
\mu\paren{\brac{00,01,11}}&=\tfrac{1}{2}+\tfrac{1}{2}+0-\tfrac{3}{4}=\tfrac{1}{4}\\
\mu\paren{\brac{00,10,11}}&=1+\tfrac{1}{2}+\tfrac{1}{2}-\tfrac{3}{4}=\tfrac{5}{4}\\
\mu\paren{\brac{01,10,11}}&=\tfrac{1}{2}+0+\tfrac{1}{2}-\tfrac{3}{4}=\tfrac{1}{4}
\end{align*}
Finally, we illustrate the computation of two quantum integrals for the situation in the last paragraph. Let $f$ be the function that gives the ``length'' of a path where $f(00)=f(11)=1$, $f(01)=f(10)=\sqrt{2\,}$. We then have
\begin{equation*}
\int fd\mu =\tfrac{1}{4}\paren{2+2\sqrt{2\,}\,}+\tfrac{1}{2}-\tfrac{1}{2}=\tfrac{1}{4}\paren{2+2\sqrt{2\,}\,}
\end{equation*}
which is the classical result. To get a nonclassical integral define the function $g$ by $g(00)=0$, $g(01)=g(10)=1$,
$g(11)=2$. We then have
\begin{equation*}
\int gd\mu =\tfrac{1}{4}(0+1+1+2)+0\cdot\tfrac{1}{2}+1\cdot\paren{-\tfrac{1}{2}}=\tfrac{1}{2}
\end{equation*}

\end{document}